\newcommand{\blind}{0}
\newtheorem{theorem}{\bf Theorem}
\newtheorem{lemma}{\bf Lemma}
\numberwithin{equation}{section}
\begin{document}

\def\spacingset#1{\renewcommand{\baselinestretch}%
{#1}\small\normalsize} \spacingset{1}


\if0\blind
{\title{On a Multiplicative Algorithm for Computing Bayesian D-optimal Designs}
  \author{Yaming Yu\\
    Department of Statistics, University of California, Irvine}
  \maketitle
} \fi

\if1\blind
{
  \bigskip
  \bigskip
  \bigskip
  \begin{center}
    {\LARGE\bf On a Multiplicative Algorithm for Computing Bayesian D-optimal Designs}
\end{center}
  \medskip
} \fi

\bigskip
\begin{abstract}
We use the minorization-maximization principle (Lange, Hunter and Yang 2000) to establish the monotonicity of a multiplicative algorithm for computing Bayesian D-optimal designs.  This proves a conjecture of Dette, Pepelyshev and Zhigljavsky (2008). 
\end{abstract}

\noindent%
{\it Keywords:} Bayesian D-optimality; experimental design; MM algorithms; monotonic convergence; overrelaxation.
\spacingset{1.45}

\section{Introduction}
Multiplicative algorithms (Silvey, Titterington and Torsney 1978; Torsney and Mandal 2006; Dette, Pepelyshev and Zhigljavsky 2008) are often employed in numerical computation of optimal designs (approximate theory; see Kiefer 1974, Silvey 1980, and Pukelsheim 1993).  These iterative algorithms are simple, easy to implement, and often increase the optimality criterion monotonically.  In the case of D-optimality, for example, monotonicity of the algorithm of Silvey et al.\ (1978) is well known (Titterington 1976; P\'{a}zman 1986); see Yu (2010a) and the references therein for further results.  Monotonicity is an important property as it implies convergence under mild conditions.

Bayesian D-optimality is a widely used design criterion that can accommodate prior uncertainty in the parameters (see Chaloner and Larntz 1989 and Chaloner and Verdinelli 1995).  Multiplicative algorithms extend naturally from D-optimality to Bayesian D-optimality (Dette et al.\ 2008).  Although the form of the algorithms is just as simple as in the D-optimal case, a corresponding monotonicity result is still lacking.  In the context of nonlinear regression, Dette et al.\ (2008) conjecture the monotonicity of a class of algorithms for computing Bayesian D-optimal designs.  The main theoretical contribution of this work is to confirm their monotonicity conjecture. 

Our technical devices include convexity and the minorization-maximization principle (MM; Lange, Hunter and Yang 2000; Hunter and Lange 2004).  Similar ideas play a key role in settling the related Titterington's (1978) conjecture (see Yu 2010a, 2010b).  Minorization-maximization (or bound optimization) is a general method for constructing iterative algorithms that increase an objective function $\phi(w)$ monotonically.  We first construct a function $Q(w; \tilde{w})$ such that $\phi(w)\geq Q(w; \tilde{w})$ for all $w$ and $\tilde{w}$, and $\phi(w)=Q(w; w)$.  Suppose the current iterate is $w^{(t)}$.  We choose $w^{(t+1)}$ to increase the $Q$ function, i.e., 
\begin{equation}
\label{qincrease}
Q\left( w^{(t+1)}; w^{(t)}\right)\geq Q\left( w^{(t)}; w^{(t)}\right).
\end{equation}
Then $w^{(t+1)}$ also increases the objective function $\phi$, because 
$$\phi\left(w^{(t+1)}\right)\geq Q\left( w^{(t+1)}; w^{(t)}\right)\geq Q\left( w^{(t)}; w^{(t)}\right) =\phi\left(w^{(t)}\right).$$
The usual MM algorithm chooses $w^{(t+1)}$ to maximize $Q\left(\cdot; w^{(t)}\right)$.  Since we only require (\ref{qincrease}), it is proper to call this strategy a {\it general MM algorithm}.  The general MM algorithm is an extension of the general expectation-maximization algorithm (GEM; Dempster, Laird and Rubin 1977). 

In Section~2 we state our monotonicity result and illustrate with a simple logistic regression example.  Section~3 proves the monotonicity result.  Specifically, the algorithm of Dette et al.\ (2008) for computing Bayesian D-optimal designs is derived as a general MM algorithm. 

\section{Theoretical Result and Illustration}
We focus on a finite design space $\mathcal{X}=\{x_1, \ldots, x_n\}.$  Let $\theta$ be the $m\times 1$ parameter of interest, and let $A_i(\theta)$ denote the $m\times m$ Fisher information matrix provided by a unit assigned to design point $x_i$.  The so-called Bayesian D-optimality (Chaloner and Larntz 1989) seeks to maximize 
\begin{equation*}
\phi (w)\equiv \int \log\det M(w, \theta)\, {\rm d}\pi(\theta),
\end{equation*}
where $\pi(\theta)$ is a probability distribution representing prior knowledge about $\theta$, and 
$$M(w, \theta)=\sum_{i=1}^n w_i A_i(\theta).$$
This is an extension of local D-optimality which chooses the design weights $w_i$ to maximize the log-determinant of the Fisher information for a fixed $\theta$.  It can also be viewed as a large sample approximation to Lindley's (1956) criterion based on Shannon information.  Here $w=(w_1,\ldots, w_n)\in \bar{\Omega},$ and $\bar{\Omega}$ denotes the closure of $\Omega=\{w:\ \sum_{i=1}^n w_i=1,\ w_i> 0\}$.  To convert $w$ to a finite-sample design, some rounding procedure is needed (Pukelsheim 1993, Chapter 12).  The matrices $A_i(\theta)$ are assumed to be well defined and nonnegative definite for every $\theta$. 

Let us consider the following algorithm for maximizing $\phi(w)$.  Define 
$$d_i(w)=\int tr(M^{-1}(w, \theta) A_i(\theta))\, {\rm d}\pi(\theta).$$

\noindent {\bf Algorithm~I} 
\begin{description}
\item
Set $w^{(0)}=(w_1^{(0)}, \ldots, w_n^{(0)})\in \Omega$.  That is, $w_i^{(0)}>0$ for all $i$.
\item
For $t=0,1,\ldots$, compute 
\begin{equation}
\label{alg1}
w_i^{(t+1)}=w_i^{(t)} \frac{d_i\left(w^{(t)}\right)-\alpha^{(t)}}{m-\alpha^{(t)}},\quad i=1, \ldots, n, 
\end{equation}
where $\alpha^{(t)}$ satisfies 
\begin{equation}
\label{bd}
\alpha^{(t)}\leq \frac{1}{2}\min_{i=1}^n d_i\left(w^{(t)}\right).
\end{equation}
\item 
Iterate until convergence.
\end{description}

A commonly used convergence criterion is 
\begin{equation}
\label{conv}
\max_{i=1}^n d_i\left(w^{(t)}\right)\leq m+\epsilon,
\end{equation}
where $\epsilon$ is a small positive constant.  This is based on the general equivalence theorem (Kiefer and Wolfowitz 1960; Whittle 1973), which characterizes any maximizer of $\phi(w)$, $\hat{w}$, by $\max_{i=1}^n d_i(\hat{w})= m$. 

Algorithm~I slightly generalizes the one proposed by Dette et al. (2008).  In a regression context, Dette et al. (2008) prove that Algorithm~I is monotonic for D-optimality, i.e., when $\pi(\theta)$ is a point mass.  Numerical examples support the conjecture that Algorithm~I is monotonic for Bayesian D-optimality in general.  We shall confirm this conjecture (Theorem~\ref{thm1}). 

\begin{theorem}
\label{thm1}
Assume $\phi(w)$ is finite for at least one $w\in \Omega$.  Let $w^{(t)}, w^{(t+1)}\in \Omega$ satisfy (\ref{alg1}) and (\ref{bd}).  Then we have  $$\phi\left(w^{(t+1)}\right)\geq \phi\left(w^{(t)}\right),$$ 
with equality only if $w^{(t+1)}=w^{(t)}$. 
\end{theorem}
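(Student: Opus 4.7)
The plan is to apply the general MM principle (\ref{qincrease}) with the minorizer
\[
Q(w;\tilde w)\;:=\;\phi(\tilde w)\;+\;\sum_{i=1}^n \tilde w_i\,d_i(\tilde w)\,\log(w_i/\tilde w_i),
\]
whose coefficient $\tilde w_i d_i(\tilde w)$ is independent of $\alpha$. The entire role of $\alpha$ and the bound (\ref{bd}) will be absorbed in a one-variable convexity argument applied to $H(x):=x\log((x-\alpha)/(m-\alpha))$.

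The first step is to establish $\phi(w)\ge Q(w;\tilde w)$ for all $w,\tilde w\in\Omega$. I would fix $\theta$, factor $A_i(\theta)=B_i B_i^{\top}$, and apply the Cauchy--Binet formula to obtain $\det M(w,\theta)=\sum_S c_S(\theta)\prod_i w_i^{n_i(S)}$, where $c_S\ge0$ and the nonnegative integers $n_i(S)$ satisfy $\sum_i n_i(S)=m$. Dividing by $\det M(\tilde w,\theta)$ produces a probability distribution $p_S(\theta)$ on $S$, so that
\[
\log\frac{\det M(w,\theta)}{\det M(\tilde w,\theta)}\;=\;\log\sum_S p_S(\theta)\prod_i(w_i/\tilde w_i)^{n_i(S)}\;\ge\;\sum_S p_S(\theta)\sum_i n_i(S)\log(w_i/\tilde w_i)
\]
by Jensen's inequality for the concave logarithm. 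Matching $\partial_{w_i}\log\det M$ at $w=\tilde w$ identifies $\sum_S p_S(\theta)\,n_i(S)=\tilde w_i d_i(\tilde w,\theta)$, and integrating the resulting pointwise bound against $\pi$ yields the required minorization.

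The second step is to verify (\ref{qincrease}) for the update (\ref{alg1}). Writing $\alpha=\alpha^{(t)}$, $d_i=d_i(w^{(t)})$, direct substitution gives
\[
Q(w^{(t+1)};w^{(t)})-Q(w^{(t)};w^{(t)})\;=\;\sum_i w_i^{(t)}\,H(d_i),\qquad H(x):=x\log\!\frac{x-\alpha}{m-\alpha}.
\]
A short computation gives $H''(x)=(x-2\alpha)/(x-\alpha)^2$, so $H$ is convex on $[2\alpha,\infty)$; the hypothesis (\ref{bd}) places every $d_i$ in this convex region. Since $\sum_i w_i^{(t)}d_i=m$ and $H(m)=0$, Jensen's inequality yields $0=H\bigl(\sum_i w_i^{(t)}d_i\bigr)\le\sum_i w_i^{(t)}H(d_i)$, establishing (\ref{qincrease}); the chain of inequalities in the introduction then delivers $\phi(w^{(t+1)})\ge\phi(w^{(t)})$. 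For the equality claim, $H$ is strictly convex on $(2\alpha,\infty)$, so Jensen's inequality is strict unless the $d_i$ are identically equal to their common $\tilde w$-mean $m$, and in that case (\ref{alg1}) forces $w^{(t+1)}=w^{(t)}$. The main obstacle is recognizing which convex function turns the overrelaxation into a one-line Jensen argument: the factor $1/2$ in (\ref{bd}) is forced by the identity $H''(x)=(x-2\alpha)/(x-\alpha)^2$, and matching these two expressions is what bridges the basic ($\alpha=0$) minorizer and the overrelaxed update of Dette et al.
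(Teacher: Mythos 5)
Your proposal is correct and follows essentially the same route as the paper: Cauchy--Binet gives $\det M(w,\theta)$ as a nonnegative combination of degree-$m$ monomials, Jensen's inequality for $\log$ applied to the induced mixture $p_S(\theta)$ yields the minorizer $Q(w;\tilde w)$ (the paper's Lemmas~1 and~2), and Jensen's inequality for the convex function $x\log\bigl((x-\alpha)/(m-\alpha)\bigr)$ on $x\ge 2\alpha$ gives the $Q$-increase and its strictness (the paper's Lemma~3, stated there for $x\log(x-\alpha)$ with the normalized weights $q_i=d_i\tilde w_i/m$, which is the same computation up to rescaling). The only differences are organizational, so no further comparison is needed.
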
 

Once strictly monotonicity is established, global convergence holds under mild conditions.  We state such a result where $\alpha^{(t)}$ takes a convenient parametric form. 
\begin{theorem}
\label{thm2}
Assume $\phi(w)$ is finite for at least one $w\in \Omega$.  Let $w^{(t)}$ be a sequence generated by (\ref{alg1}), starting with $w^{(0)}\in \Omega$.  Assume 
\begin{equation}
\label{alpha}
\alpha^{(t)}=\frac{a}{2} \min_{i=1}^n d_i\left(w^{(t)}\right),
\end{equation}
where $a\in [0, 1]$ is a constant.  Then all limit points of $w^{(t)}$ are global maxima of $\phi(w)$ on $\bar{\Omega}$. 
\end{theorem}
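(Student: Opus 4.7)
The plan is to combine the strict monotonicity in Theorem~\ref{thm1} with the concavity of $\phi$ on $\bar\Omega$ and the Kiefer--Wolfowitz equivalence theorem quoted after (\ref{conv}) to show that every limit point $w^*$ satisfies the optimality conditions $d_i(w^*)\leq m$ for all $i$, with equality whenever $w^*_i>0$. Since $w\mapsto\log\det M(w,\theta)$ is concave for each $\theta$, $\phi$ is concave on $\bar\Omega$, and since the eigenvalues of $M(w,\theta)$ are uniformly bounded by $\sum_i\|A_i(\theta)\|$, $\phi$ is bounded above on $\bar\Omega$. By Theorem~\ref{thm1} the sequence $\phi(w^{(t)})$ is nondecreasing and bounded, hence converges to some $L\leq\sup_{\bar\Omega}\phi$. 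Compactness of $\bar\Omega$ guarantees limit points; for any one, $w^*$ with $w^{(t_k)}\to w^*$ along a subsequence, upper semicontinuity of $\phi$ combined with the standing assumption that $\phi$ is finite at some $w\in\Omega$ pins down $\phi(w^*)=L$.

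Next I would carry out a two-step fixed-point argument. Let $T:\Omega\to\Omega$ denote the update map defined by (\ref{alg1}) with $\alpha^{(t)}$ as in (\ref{alpha}); it is continuous on $\Omega$. By compactness, extract a further subsequence along which $w^{(t_k+1)}=T(w^{(t_k)})\to w^{**}\in\bar\Omega$. When $w^*\in\Omega$, continuity at $w^*$ yields $w^{**}=T(w^*)$, and the equality $\phi(w^{**})=L=\phi(w^*)$ together with the ``equality only if $w^{(t+1)}=w^{(t)}$'' clause of Theorem~\ref{thm1} forces $T(w^*)=w^*$. Reading (\ref{alg1}) coordinatewise then gives $d_i(w^*)=m$ for every $i$, which combined with the identity $\sum_i w_i d_i(w)\equiv m$ (a direct consequence of $tr(M^{-1}M)=m$) and the equivalence theorem certifies $w^*$ as a global maximizer of the concave $\phi$ on $\bar\Omega$.

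The main obstacle is the boundary case $w^*\in\bar\Omega\setminus\Omega$, where some $d_j$ can diverge and $T$ can be discontinuous, so the fixed-point step does not apply at $w^*$. To handle any inactive index $j$ with $w^*_j=0$ I would argue by contradiction: if $d_j(w^*)>m$ (interpreted via lower semicontinuity of $w\mapsto d_j(w)$, which holds because the integrand $tr(M^{-1}(w,\theta)A_j(\theta))$ is nonnegative and lower semicontinuous in $w$ by Fatou's lemma), then along the tail of the subsequence the multiplier $(d_j(w^{(t_k)})-\alpha^{(t_k)})/(m-\alpha^{(t_k)})$ stays bounded strictly above $1$, so $w_j^{(t_k+1)}\geq(1+\eta)w_j^{(t_k)}$ for some fixed $\eta>0$. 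Matched against the quantitative increment bound $\phi(w^{(t+1)})-\phi(w^{(t)})\geq Q(w^{(t+1)};w^{(t)})-Q(w^{(t)};w^{(t)})$ supplied by the MM minorizer constructed in the proof of Theorem~\ref{thm1}, this multiplier behaviour forces a positive lower bound on infinitely many increments of $\phi(w^{(t)})$, contradicting its convergence. Establishing this quantitative link between ``multiplier bounded away from $1$'' and ``strictly positive $\phi$-increment'' is the delicate technical step; once in hand, it gives $d_j(w^*)\leq m$ for every $j$, and the equivalence theorem concludes the proof.
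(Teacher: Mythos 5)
The paper gives no detailed proof of Theorem~\ref{thm2}: it explicitly defers to ``an argument similar to that of Theorem~2 of Yu (2010a) (details omitted).'' Your overall architecture --- monotone bounded $\phi(w^{(t)})$, compactness of $\bar\Omega$, a fixed-point argument for interior limit points, and the equivalence theorem $\max_i d_i(\hat w)=m$ to certify optimality --- is exactly the intended route, and your treatment of the interior case ($w^*\in\Omega$, continuity of $T$, the strictness clause of Theorem~\ref{thm1} forcing $T(w^*)=w^*$, hence $d_i(w^*)=m$ for all $i$) is sound.

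The boundary case, however, contains a genuine gap, and it is precisely the step that makes this theorem nontrivial. You argue that if $w^*_j=0$ and $d_j(w^*)>m$, then the multiplier for coordinate $j$ stays above $1+\eta$ near $w^*$, and that this ``forces a positive lower bound on infinitely many increments of $\phi$.'' It does not. The minorizer increment is $Q(w^{(t+1)};w^{(t)})-Q(w^{(t)};w^{(t)})=\sum_i d_i(w^{(t)})\,w_i^{(t)}\log\frac{d_i(w^{(t)})-\alpha^{(t)}}{m-\alpha^{(t)}}$, and the only term you control, the $j$-th, is weighted by $w_j^{(t_k)}\to 0$; the Jensen gap in Lemma~\ref{lem3} likewise degenerates as the weight on the outlying ratio $r_j=d_j/m$ vanishes. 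So the increments can tend to zero even though the multiplier is bounded away from $1$, and no contradiction with $\phi(w^{(t)})\to L$ follows. A second, related hole: the geometric growth $w_j^{(t_k+1)}\ge(1+\eta)w_j^{(t_k)}$ is established only at the subsequence times $t_k$ (when the iterate is near $w^*$), and $w_j^{(t)}$ is free to decay between $t_k+1$ and $t_{k+1}$, so the growth bound alone is also not contradictory. Closing this requires controlling the whole tail of the trajectory (e.g., showing $\|w^{(t+1)}-w^{(t)}\|\to 0$ and that the iterates eventually remain in a neighborhood of the limit set, then propagating the lower bound on $w_j$), which is the substance of the argument in Yu (2010a) that the paper invokes. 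As written, the proposal proves optimality only for limit points in $\Omega$ and for support coordinates of boundary limit points; the claim $d_j(w^*)\le m$ for $j$ outside the support remains unproved.
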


Note that a limit point of $w^{(t)}$ may have some zero coordinates, although we require the starting value $w_i^{(0)}>0$ for all $i$.  Also, $\alpha^{(t)}$ changes from iteration to iteration.  Nevertheless, based on Theorem~\ref{thm1}, Theorem~\ref{thm2} can be established by an argument similar to that of Theorem~2 of Yu (2010a) (details omitted). 

A natural question is the choice of $\alpha^{(t)}$.  Given similar computing costs per iteration, it is reasonable to choose $\alpha^{(t)}$ based on the convergence rate.  For D-optimal designs, i.e., when $\pi(\theta)$ is a point mass, Yu (2010b) analyzes the convergence rate of Algorithm~I.  We expect the results to carry over to Bayesian D-optimality.  Specifically, treating the $\alpha^{(t)}\equiv 0$ case as the basic algorithm, we can view Algorithm~I with $\alpha^{(t)}>0$ as an overrelaxed version (in the sense of successive overrelaxation in numerical analysis; see Young 1971).  At each iteration, overrelaxation multiplies the step length of the basic algorithm by $m/(m-\alpha^{(t)})$.  Noting $\sum_{i=1}^n d_i(w) w_i=m$ and (\ref{bd}), we have $\alpha^{(t)}\leq m/2$.  That is, $m/(m-\alpha^{(t)})\leq 2$.  Thus, roughly speaking, overrelaxation can at most double the speed of the basic algorithm.  A caveat is that, when the basic algorithm is very fast, overrelaxation may slow it down due to overshooting.  Nevertheless, the examples provided by Dette et al.\ (2008) indicate that this rarely happens in practice.  The slowness of the basic algorithm is usually the main concern. 

For the rest of this section we illustrate our theoretical results with a logistic regression model 
$$\Pr(y=1|x, \theta)=1-\Pr(y=0|x,\theta)= \left(1+\exp\left(-x^\top \theta\right)\right)^{-1}.$$
More examples can be found in Dette et al.\ (2008).  Consider the design space 
$$\mathcal{X}_1=\left\{x_i=(1, i/10-1)^\top:\ i=1,\ldots, 30\right\}.$$
The Fisher information for $\theta$ from a unit assigned to $x_i$ is 
$$A_i(\theta)=x_i x_i^\top\frac{\exp(\eta_i)}{(1+\exp(\eta_i))^2},\quad \eta_i\equiv x_i^\top \theta.$$
Suppose the distribution $\pi(\theta)$ assigns probability $1/25$ to each point in the following set 
$$\left\{(i, j)^\top:\ i, j=-2, -1, 0, 1, 2\right\}.$$
We implement Algorithm~I to compute the Bayesian D-optimal design.  The $\alpha^{(t)}$ is specified by (\ref{alpha}) with several choices of $a$.  Each algorithm is started at the uniform design $w^{(0)}=(1/30, \ldots, 1/30)$, and we consider two convergence criteria corresponding to (\ref{conv}) with $\epsilon=10^{-3}$ and $\epsilon=10^{-4}$ respectively.  Table~1, which records the iteration counts, shows the advantage of using larger $a$ ($a\leq 1$).  The large iteration counts when $\epsilon=10^{-4}$ illustrate the potential slow convergence of Algorithm~I.  We also display the optimality criterion $\phi\left(w^{(t)}\right)$ in Figure~1.  As Theorem \ref{thm1} claims, $\phi\left(w^{(t)}\right)$ increases monotonically for each algorithm. 

\begin{table}
\caption{Iteration counts for Algorithm~I with $\alpha^{(t)}$ specified by (\ref{alpha}).}
\begin{center}
\begin{tabular}{r|rrrrr}
\hline
                 & $a=0$  & $a=1/4$ & $a=1/2$  & $a=3/4$  & $a=1$ \\
\hline
$\epsilon=10^{-3}$ & 929  & 823     & 718      & 613      & 507   \\
$\epsilon=10^{-4}$ & 4112 & 3643    & 3175     & 2706     & 2238  \\
\hline
\end{tabular}
\end{center}
\end{table}

\begin{figure}
\begin{center}
\psfrag{phi(w)}{$\phi(w)$}
\includegraphics[width=3.8in, height=5.5in, angle=270]{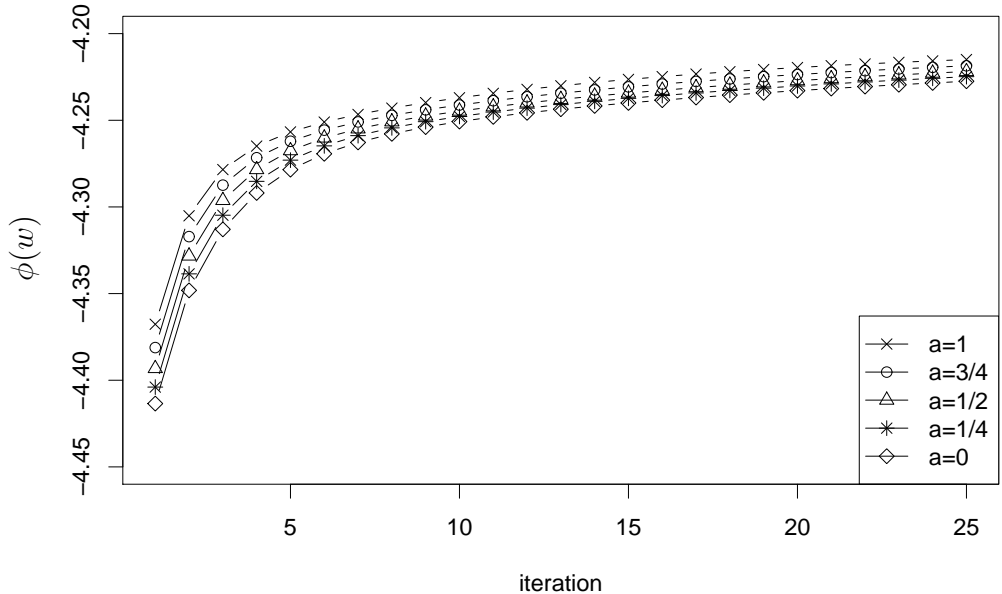}
\end{center}
\caption{Monotonicity of $\phi\left(w^{(t)}\right)$ for Algorithm~I.}
\end{figure}

Table~2 records the design weights as calculated by Algorithm~I with $a=1$.  Note that, as the more stringent  criterion $\epsilon=10^{-4}$ is adopted, the weights assigned to the middle cluster $x_i,\ i=14,\ldots, 18,$ become more concentrated around $x_{16}$.  One interpretation is that Algorithm~I sometimes has difficulty apportioning mass among adjacent design points, and therefore the convergence is slow.  This also hints at potential remedies for the slow convergence.  For computing D-optimal designs, Yu (2010c) proposes a ``cocktail algorithm'' that combines three different strategies for fast monotonic convergence.  One of the ingredients, a special case of Algorithm~I, is a multiplicative algorithm (Silvey et al.\ 1978).  Another ingredient is a strategy that facilitates mass exchange between adjacent design points.  There is no conceptual problem extending the cocktail algorithm to Bayesian D-optimality.  We are investigating such extensions and will report the findings in future works. 

\begin{table}
\caption{Output (design weights) of Algorithm~I with $a=1$ and two convergence criteria.}
\begin{center}
\begin{tabular}{r|rrrrrrr}
\hline  design points & $x_1$  & $x_{14}$ & $x_{15}$  & $x_{16}$ & $x_{17}$ & $x_{18}$ & $x_{30}$ \\
\hline
       output ($\epsilon=10^{-3}$) & 0.434 & 0.006 & 0.073   & 0.114  & 0.035 & 0.003  & 0.334    \\
       output ($\epsilon=10^{-4}$) & 0.435 & 0.000 & 0.026   & 0.204  & 0.002 & 0.000  & 0.334    \\
\hline 
\end{tabular}
\end{center}
\end{table}

\section{Monotonicity of Algorithm~I}
This section proves Theorem~\ref{thm1}.  We need Lemma~\ref{lem1}, which slightly extends Lemma 1 of Dette et al.\ (2008). 

\begin{lemma}
\label{lem1}
For fixed $\theta$, $\det M(w, \theta)$ is a polynomial in $w_1, \ldots, w_n$ with nonnegative coefficients. 
\end{lemma}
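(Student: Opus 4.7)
The plan is to exploit the symmetric positive semidefinite structure of each $A_i(\theta)$ together with the Cauchy--Binet formula. For fixed $\theta$, write $A_i \equiv A_i(\theta)$; since $A_i$ is symmetric and nonnegative definite, it admits a rank decomposition $A_i = \sum_{j=1}^{r_i} u_{ij} u_{ij}^\top$ for vectors $u_{ij}\in \mathbb{R}^m$ (e.g., suitably scaled eigenvectors from the spectral decomposition). Consequently,
\[
M(w,\theta) = \sum_{i=1}^n \sum_{j=1}^{r_i} w_i\, u_{ij} u_{ij}^\top = V(w)\, V(w)^\top,
\]
where $V(w)$ is the $m \times R$ matrix (with $R = \sum_i r_i$) whose columns, indexed by pairs $(i,j)$, are $\sqrt{w_i}\, u_{ij}$.

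Next, I would apply the Cauchy--Binet formula. If $R < m$, then $M$ is rank deficient and $\det M \equiv 0$, which is vacuously a polynomial with nonnegative coefficients. Otherwise,
\[
\det M(w,\theta) = \det\bigl(V(w)\, V(w)^\top\bigr) = \sum_{|S|=m} \bigl[\det V(w)_{\cdot,S}\bigr]^2,
\]
the sum running over all $m$-element subsets $S$ of the columns of $V(w)$. Pulling the scalar $\sqrt{w_i}$ out of each column of $V(w)_{\cdot,S}$ yields
\[
\bigl[\det V(w)_{\cdot,S}\bigr]^2 = \Bigl(\prod_{i=1}^n w_i^{k_i(S)}\Bigr)\bigl[\det U_{\cdot,S}\bigr]^2,
\]
where $U$ denotes the analogous unweighted column concatenation and $k_i(S)$ counts how many columns of $S$ are drawn from the $i$-th block. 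Summing over $S$ exhibits $\det M(w,\theta)$ as a sum of monomials in $w_1,\ldots,w_n$ with nonnegative coefficients $\bigl[\det U_{\cdot,S}\bigr]^2 \geq 0$.

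The argument is mostly mechanical; the only bookkeeping is tracking column-to-block assignments in the Cauchy--Binet expansion and isolating the degenerate low-rank case. The PSD structure combined with Cauchy--Binet delivers the nonnegativity for free, so no step is genuinely delicate. An alternative route via multilinearity of $\det$ and mixed discriminants (Alexandrov--Fenchel) would also work, but the rank-decomposition approach above is more elementary and self-contained.
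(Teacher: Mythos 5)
Your proof is correct and follows essentially the same route as the paper: factor $M(w,\theta)$ through a square root of each $A_i(\theta)$ (the paper uses $A_i^{1/2}$ and keeps the weights in a central diagonal matrix ${\rm Diag}(w)\otimes I_m$, while you use a rank decomposition and absorb $\sqrt{w_i}$ into the columns), then apply Cauchy--Binet to write $\det M$ as a sum of squared minors times monomials in $w$. The differences are purely cosmetic bookkeeping, so no further comparison is needed.
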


\begin{proof}
Let $I_m$ denote the $m\times m$ identity matrix, and define an $m\times (mn)$ matrix $G$ by
$$G=(G_1, \ldots, G_{mn})\equiv \left(A_1^{1/2}(\theta), \ldots, A_n^{1/2}(\theta)\right).$$
We have 
$$M(w, \theta)=G ({\rm Diag}(w)\otimes I_m) G^\top.$$
The Cauchy-Binet formula (Horn and Johnson 1990, Chapter 0) yields 
$$\det M(w, \theta)=\sum_{1\leq i_1<\cdots<i_m\leq mn} h(i_1, \ldots, i_m) u_{i_1} \cdots u_{i_m},$$
where $h(i_1, \ldots, i_m)=\det^2 (G_{i_1}, \ldots, G_{i_m}),$ and $u_i$ denotes the $i$th diagonal of ${\rm Diag}(w)\otimes I_m$.  The claim holds because $u_i$ is equal to one of $w_j$, and $h(i_1, \ldots, i_m)\geq 0$. 
\end{proof}

Lemma \ref{lem2} serves as a building block for constructing our minorization-maximization strategy. 
\begin{lemma}
\label{lem2}
Let $g(w)$ be a nonzero polynomial in $w=(w_1, \ldots, w_n)$ with nonnegative coefficients.  Define 
$$Q(w; \tilde{w})=\sum_{i=1}^n \frac{\partial{g(\tilde{w})}}{\partial w_i} \frac{\tilde{w}_i}{g(\tilde{w})} \log w_i,\quad w, \tilde{w}\in \Omega.$$
Then we have 
$$Q(w; \tilde{w})- Q(\tilde{w}; \tilde{w})\leq \log g(w)-\log g(\tilde{w}).$$ 
\end{lemma}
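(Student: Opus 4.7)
The plan is to exploit the fact that a polynomial with nonnegative coefficients is a nonnegative combination of monomials, so $\log g(w)$ becomes the logarithm of a positive mixture; concavity of $\log$ then supplies the needed lower bound. This is the classical MM recipe for log-sum expressions.

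First, I would write $g(w) = \sum_k c_k w^{\alpha_k}$, where $c_k \geq 0$ and $w^{\alpha_k} = \prod_i w_i^{\alpha_{k,i}}$ with nonnegative integer exponents $\alpha_{k,i}$. For $\tilde{w} \in \Omega$, define weights
\begin{equation*}
p_k(\tilde{w}) = \frac{c_k \tilde{w}^{\alpha_k}}{g(\tilde{w})}, \qquad \sum_k p_k(\tilde{w}) = 1.
\end{equation*}
Then $g(w)/g(\tilde{w}) = \sum_k p_k(\tilde{w}) (w/\tilde{w})^{\alpha_k}$ (where the notation means the corresponding product of ratios). Jensen's inequality applied to the concave function $\log$ gives
\begin{equation*}
\log g(w) - \log g(\tilde{w}) \;\geq\; \sum_k p_k(\tilde{w}) \sum_i \alpha_{k,i} (\log w_i - \log \tilde{w}_i).
\end{equation*}

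Next, I would swap the order of summation on the right-hand side and identify the resulting coefficient of $(\log w_i - \log \tilde{w}_i)$. The key computation is the Euler-type identity
\begin{equation*}
\tilde{w}_i \frac{\partial g(\tilde{w})}{\partial w_i} = \sum_k c_k \alpha_{k,i} \tilde{w}^{\alpha_k},
\end{equation*}
obtained by differentiating each monomial term-by-term. Dividing by $g(\tilde{w})$ yields $\sum_k p_k(\tilde{w}) \alpha_{k,i} = \tilde{w}_i g^{-1}(\tilde{w}) \partial g(\tilde{w})/\partial w_i$. Plugging this back, the right-hand side of the Jensen bound is exactly $Q(w;\tilde{w}) - Q(\tilde{w};\tilde{w})$, completing the argument.

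The proof is essentially routine once the monomial decomposition is in hand, so I do not anticipate any serious obstacle. The only point that requires a bit of care is that $g(w)$ is a general polynomial with nonnegative coefficients rather than a single monomial, which is precisely why we need the mixture representation and Jensen rather than a direct computation; and we should remark that the nonzero hypothesis on $g$ ensures $g(\tilde{w}) > 0$ for $\tilde{w} \in \Omega$ so that the weights $p_k(\tilde{w})$ are well defined. Everything else, including the case of monomials missing some variables (i.e., $\alpha_{k,i} = 0$), is automatically handled by the convention $0 \cdot \log w_i = 0$ in the definition of $Q$.
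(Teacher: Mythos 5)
Your proof is correct and follows essentially the same route as the paper: decompose $g$ into monomials, form the mixture weights $c_k\tilde{w}^{\alpha_k}/g(\tilde{w})$, and apply Jensen's inequality to the concave logarithm, with the Euler-type identity $\tilde{w}_i\,\partial g(\tilde{w})/\partial w_i=\sum_k c_k\alpha_{k,i}\tilde{w}^{\alpha_k}$ linking the mixture form back to $Q$. The paper merely organizes the same inequality slightly differently (rewriting $Q(w;\tilde{w})$ as $\sum_j \frac{c_j f_j(\tilde{w})}{g(\tilde{w})}\log f_j(w)$ first and then invoking Jensen in its relative-entropy form), so there is nothing substantive to add.
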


\begin{proof}
Write $g(w)=\sum_{j=1}^J c_j f_j(w)$ where $c_j> 0$ and $f_j(w)$ are monomials in $w$.  We have
$$\sum_{i=1}^n \frac{\partial f_j(\tilde{w})}{\partial w_i} \tilde{w}_i \log w_i = f_j(\tilde{w}) \log f_j(w),\quad j=1, \ldots, J,$$
because $f_j$ are monomials.  Multiplying both sides by $c_j/g(\tilde{w})$ and then summing over $j$ yield 
$$Q(w; \tilde{w}) =\sum_{j=1}^J \frac{c_j f_j(\tilde{w})}{g(\tilde{w})} \log f_j(w).$$
Hence 
\begin{align*}
Q(w; \tilde{w}) -Q(\tilde{w}, \tilde{w}) -\log g(w)+\log g(\tilde{w}) &= \sum_{j=1}^J \frac{c_j f_j(\tilde{w})}{g(\tilde{w})} \log \frac{c_j f_j(w)/g(w)}{c_j f_j(\tilde{w})/g(\tilde{w})}\\
&\leq \log\left(\sum_{j=1}^J \frac{c_j f_j(w)}{g(w)}\right)\\
&=0,
\end{align*}
where the inequality holds by Jensen's inequality applied to the concave function $\log x$. 
\end{proof}
Lemma~\ref{lem3} is implicit in Dette et al.\ (2008).  The proof is included for completeness. 
\begin{lemma}
\label{lem3}
Define $Q(w)=\sum_{i=1}^n q_i \log w_i,\ q, w\in \Omega$.  For a fixed $w$, let $\alpha$ be a scalar that satisfies 
$$\alpha\leq \frac{1}{2}\min_{i=1}^n \frac{q_i}{w_i}.$$ 
Then we have 
$$Q(\hat{w}) \geq Q(w),\quad \hat{w} \equiv \frac{q - \alpha w}{1-\alpha},$$
with equality only if $\hat{w}=w$. 
\end{lemma}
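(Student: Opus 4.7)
The plan is to interpret $Q(\hat w)-Q(w)$ as a $w$-weighted average of a one-variable function evaluated at the ratios $s_i:=q_i/w_i$, and then apply Jensen's inequality to a convex function chosen so that the hypothesis on $\alpha$ is exactly what is needed for convexity on the relevant interval.

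First I would check $\hat w\in\Omega$: the hypothesis $\alpha\le \tfrac12\min_i q_i/w_i$ gives $q_i-\alpha w_i\ge q_i/2>0$, and $1-\alpha>0$ (note that $\min_i q_i/w_i\le 1$ since $\sum q_i=\sum w_i=1$, so $\alpha\le 1/2$); also $\sum_i \hat w_i=(1-\alpha)/(1-\alpha)=1$. Then substitute $s_i=q_i/w_i$ to rewrite
\[
Q(\hat w)-Q(w)=\sum_{i=1}^n q_i\log\frac{\hat w_i}{w_i}=\sum_{i=1}^n w_i\,f(s_i),\qquad f(s):=s\log\frac{s-\alpha}{1-\alpha}.
\]
Two facts will drive the argument. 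The $w$-mean of $s$ is $\sum_i w_i s_i=\sum_i q_i=1$, so Jensen evaluated at the mean lands on $f(1)=0$, matching the desired inequality. A direct differentiation gives
\[
f''(s)=\frac{s-2\alpha}{(s-\alpha)^2},
\]
which is nonnegative on $[2\alpha,\infty)$; and the hypothesis on $\alpha$ says precisely that every $s_i$ lies in $[2\alpha,\infty)$. Thus $f$ is convex on an interval containing the support of $s$ and its $w$-mean, and Jensen's inequality yields $\sum_i w_i f(s_i)\ge f(1)=0$, which is the claimed monotonicity.

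For the equality case, observe that $f''(s)>0$ strictly whenever $s>2\alpha$, so $f$ is strictly convex on $[2\alpha,\infty)$. Since every $w_i>0$, Jensen's equality condition forces all $s_i$ equal, and combined with $\sum_i w_i s_i=1$ this gives $s_i\equiv 1$, i.e.\ $q=w$, whence $\hat w=(w-\alpha w)/(1-\alpha)=w$.

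The only genuinely non-mechanical step is choosing the right function to hand to Jensen: the naive attempt with $\log$ itself (concave) produces an inequality in the wrong direction, while the weighting by $s$ converts it into the convex $f$ above, and it is precisely the factor $2$ in the hypothesis $\alpha\le \tfrac12\min_i q_i/w_i$ that guarantees $f''\ge 0$ on the range of the $s_i$'s. Once that substitution is spotted, the remainder is a routine convexity calculation and Jensen's inequality.
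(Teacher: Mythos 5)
Your proof is correct and is essentially identical to the paper's: the same substitution $r_i=q_i/w_i$, the same rewriting of $Q(\hat w)-Q(w)$ as a $w$-weighted average, and the same application of Jensen's inequality to $x\log(x-\alpha)$ (your extra linear term $-s\log(1-\alpha)$ does not affect convexity), with the paper likewise noting convexity on $x\ge\max\{0,2\alpha\}$ and using strict convexity for the equality case. Your added verification that $\hat w\in\Omega$ is a small bonus the paper leaves implicit.
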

\begin{proof}
Letting $r_i=q_i/w_i$, we have 
\begin{align*}
Q(\hat{w})-Q(w) &=\sum_{i=1}^n w_i r_i \log \frac{r_i -\alpha}{1-\alpha}\\
  &\geq \bar{r} \log \frac{\bar{r}-\alpha}{1-\alpha}\\
  &=0,
\end{align*}
where $\bar{r}=\sum_i w_i r_i=1$ (hence the last equality), and the inequality follows by Jensen's inequality applied to the function $x\log(x-\alpha)$, which is convex on $x\geq \max\{0,\, 2\alpha\}$.  Hence $Q(\hat{w})\geq Q(w)$.  By strict convexity, equality holds only when $r_i=\bar{r}=1$ for all $i$, i.e., when $\hat{w}=w$. 
\end{proof}

\begin{proof}[Proof of Theorem \ref{thm1}]
It is easy to see that, if $\phi(w)$ is finite for any $w\in \Omega$, then it is finite for all $w\in \Omega$.  Define $g(w, \theta)=\det M(w, \theta)$.  Because $\phi(w)$ is finite, we have $g(w,\theta)>0$ almost surely with respect to $\pi(\theta)$.  By Lemma \ref{lem1}, for fixed $\theta$, $g(w, \theta)$ is a polynomial in $w$ with nonnegative coefficients.  Define ($w, \tilde{w}\in \Omega$) 
\begin{align*}
Q(w; \tilde{w}|\theta) &\equiv \sum_{i=1}^n \frac{\partial{g(\tilde{w}, \theta)}}{\partial w_i} \frac{\tilde{w}_i}{g(\tilde{w}, \theta)} \log w_i\\
 &=\sum_{i=1}^n tr(M^{-1}(\tilde{w}, \theta) A_i(\theta)) \tilde{w}_i\log w_i. 
\end{align*}
By Lemma \ref{lem2}, we have  
$$Q(w; \tilde{w}|\theta)- Q(\tilde{w}; \tilde{w}|\theta) \leq \log g(w, \theta) -\log g(\tilde{w}, \theta).$$
Integration yields 
\begin{align*}
\sum_{i=1}^n d_i(\tilde{w}) \tilde{w}_i \log \frac{w_i}{\tilde{w}_i} &=\int \left[Q(w; \tilde{w}|\theta)-Q(\tilde{w}; \tilde{w}|\theta)\right]\, {\rm d}\pi(\theta)\\
&\leq \int \left[\log g(w, \theta) -\log g(\tilde{w}, \theta)\right]\, {\rm d}\pi(\theta)\\
&= \phi(w) -\phi(\tilde{w}). 
\end{align*}
That is, the function 
$$Q(w; \tilde{w})=\sum_{i=1}^n d_i(\tilde{w}) \tilde{w}_i \log \frac{w_i}{\tilde{w}_i}+\phi(\tilde{w})$$
satisfies $Q(w; \tilde{w})\leq \phi(w)$ and $Q(w; w)=\phi(w)$.  This forms the basis of minorization-maximization. 

Suppose $w^{(t)}, w^{(t+1)}\in \Omega$ are related by (\ref{alg1}).  Applying Lemma~\ref{lem3} with $q_i=d_i\left(w^{(t)}\right) w^{(t)}_i/m$ (note that $\sum_{i=1}^n q_i=1$), we get 
\begin{equation}
\label{qinc}
Q\left(w^{(t+1)}; w^{(t)}\right)\geq Q\left(w^{(t)}; w^{(t)}\right),
\end{equation}
as long as (\ref{bd}) holds.  Thus $\phi\left(w^{(t+1)}\right)\geq Q\left(w^{(t+1)}; w^{(t)}\right)\geq Q\left(w^{(t)};  w^{(t)}\right)=\phi\left(w^{(t)}\right),$
and monotonicity is proved.  Lemma~\ref{lem3} implies that equality holds in (\ref{qinc}) only when $w^{(t+1)}=w^{(t)}$.  Hence the monotonicity is strict. 
\end{proof}

{\bf Remark 1.} Theorem~\ref{thm1} assumes that $w^{(t)}, w^{(t+1)}\in \Omega,$ i.e., they have all positive coordinates.  This assumption can be relaxed.  Inspection of the above proof shows that strict monotonicity holds as long as $w^{(t)}\in \bar{\Omega}$ and $\phi\left(w^{(t)}\right)$ is finite. 

{\bf Remark 2.}  The arguments of Yu (2010a), based on two layers of auxiliary variables, can be extended to prove the monotonicity of (\ref{alg1}) assuming $\alpha^{(t)}\equiv 0$.  This is however weaker than Theorem~\ref{thm1} in the present form. 

\section*{Acknowledgments}
This work is partly supported by a CORCL special research grant from the University of California, Irvine.  The author would like to thank Don Rubin, Xiao-Li Meng, and David van Dyk for introducing him to the field of statistical computing.

\end{document}